\newcommand\mathbox[1]{\mathord{\ThisStyle{%
  \fboxsep3\LMpt\relax\kern1\LMpt\fbox{$\SavedStyle#1$}\kern1\LMpt}}}
\pgfplotsset{compat=1.16}
\newcommand\resetstackedplots{
\makeatletter
\pgfplots@stacked@isfirstplottrue
\makeatother
\addplot [forget plot,draw=none] coordinates{(192, 0) (768, 0) (3072, 0) (12288, 0) (49152, 0) (196608, 0)};
}
\definecolor{pgfplotsdarkblue}{RGB}{0,0,102}
\definecolor{pgfplotsmidblue}{RGB}{0,0,255}
\definecolor{pgfplotslightblue}{RGB}{152,152,255}
\begin{document}

\def\thefootnote{*}\footnotetext{These authors contributed equally to this work.}
\title{Private, Efficient and Scalable Kernel Learning for Medical Image Analysis}
%
\author{Anika Hannemann\inst{1,2,\thefootnote{*}} \and
Arjhun Swaminathan\inst{3,4,\thefootnote{*}} \and
Ali Burak Ünal\inst{3,4} \and
Mete Akgün\inst{3,4}}

\authorrunning{A. Hannemann and A. Swaminathan et al.}

\institute{
    $^1$ Department of Computer Science, Leipzig University, Leipzig, Germany \\
    $^2$ Center for Scalable Data Analytics and Artificial Intelligence (ScaDS.AI) Dresden/Leipzig, Germany \\
    $^3$ Medical Data Privacy and Privacy-preserving Machine Learning (MDPPML), University of Tübingen, Tübingen, Germany \\
    $^4$ Institute for Bioinformatics and Medical Informatics (IBMI), University of Tübingen, Tübingen, Germany \\
    \email{anika.hannemann@cs.uni-leipzig.de, arjhun.swaminathan@uni-tuebingen.de}
}

\maketitle              
\begin{abstract}
Medical imaging is key in modern medicine. From magnetic resonance imaging (MRI) to microscopic imaging for blood cell detection, diagnostic medical imaging reveals vital insights into patient health. To predict diseases or provide individualized therapies, machine learning techniques like kernel methods have been widely used. Nevertheless, there are multiple challenges for implementing kernel methods. Medical image data often originates from various hospitals and cannot be combined due to privacy concerns, and the high dimensionality of image data presents another significant obstacle. While randomised encoding offers a promising direction, existing methods often struggle with a trade-off between accuracy and efficiency.
Addressing the need for efficient privacy-preserving methods on distributed image data, we introduce OKRA (Orthonormal K-fRAmes), a novel randomized encoding-based approach for kernel-based machine learning. This technique, tailored for widely used kernel functions, significantly enhances scalability and speed compared to current state-of-the-art solutions. Through experiments conducted on various clinical image datasets, we evaluated model quality, computational performance, and resource overhead. Additionally, our method outperforms comparable approaches.
\keywords{Distributed Learning \and  Kernel Methods \and Privacy \and Machine Learning \and Medical Images}
\end{abstract}

\section{Introduction}

Medical imaging is crucial in modern medicine, offering insights into the human body and revolutionizing healthcare by aiding in disease detection, treatment planning, and patient monitoring. Kernel-based machine learning, or kernel learning, has become widely accepted in medical image analysis for identifying complex patterns by mapping data to higher-dimensional spaces \cite{seo2020machine,hasan2021review}. Techniques like kernel-based Support Vector Machines (SVM), Principal Component Analysis (PCA), Canonical Correlation Analysis (CCA), Gaussian processes, and k-means are particularly effective for smaller \cite{dou2023machine}, high-dimensional datasets and offer greater interpretability \cite{seo2020machine}. 

However, medical data is often distributed across various institutions, complicating the implementation of kernel learning due to privacy risks. Health data can re-identify individuals \cite{el2011systematic}, and regulations like GDPR prohibit sharing sensitive data. To tackle this, privacy preserving techniques could be employed. However, traditional privacy techniques, such as Homomorphic Encryption (HE) \cite{gentry2009fully} and Secure Multiparty Computation (SMPC) \cite{yao1982protocols}, face operational challenges, while Differential Privacy (DP) introduces model inaccuracies \cite{bagdasaryan2019differential,hannemann2024differentially}. 

Moreover, many healthcare institutions rely on cloud services for computational tasks \cite{al2020health}. With this in mind, our work proposes kernel-based learning on distributed medical data within a federated architecture using a semi-honest central server. This approach aligns with real-world scenarios where institutions aim to perform joint analysis for higher accuracy while maintaining data privacy and lacking on-premise computational resources. Specifically, we employ a one-shot federated learning approach, where a global model is computed in a single communication round \cite{guha2019one,griebel2015scoping}. 

This method ensures the central server cannot derive sensitive information from the data.
We introduce OKRA (Orthonormal K-fRAmes), which uses randomized encoding to project input data into higher-dimensional spaces. This facilitates the training of kernel learning models as if the data were centralized, with lower computational and communication costs compared to traditional privacy techniques. Unlike the comparable state-of-the-art solutions, OKRA requires less computation time, especially for high-dimensional data.

We make three contributions: 
\begin{compactenum}
\item We present the OKRA methodology to privately compute kernel functions on distributed data. 
\item We report a privacy analysis confirming the ability of OKRA to maintain the privacy of both input data and the size of medical images.
\item We evaluate OKRA through extensive experimentation involving various combinations of datasets, kernels and Machine Learning techniques. 
\end{compactenum}

\textbf{Paper structure}: Section \ref{sec:related_work} introduces related work.
Section~\ref{sec:methods} describes our methodology, followed by a privacy analysis in Section~\ref{sec:privacy}.
Section~\ref{sec:experiments} presents our experimental results, followed by conclusion in Section~\ref{sec:conclusion}.
\section{Related Work}
\label{sec:related_work}

\subsection{One-shot Federated Learning}
Federated Learning (FL) \cite{mcmahan2017communication} allows distributed nodes to train models on local data while preserving privacy. In traditional FL, nodes iteratively share local model parameters with a central server, which aggregates them to form a global model. However, this iterative communication introduces latency and security risks, such as data/model poisoning and membership inference attacks \cite{mothukuri2021survey}.

In contrast, one-shot federated learning approaches such as \cite{guha2019one}, restrict communication to a single round. Each participant sends its relevant parameters to the central server once, which subsequently undertakes the role of model training. For our methodology, the central server computes relevant kernels with a single round of communication.

\subsection{Privacy Preserving Techniques}
FL incorporates various techniques to safeguard the privacy of input data:
\paragraph{Cryptography-based Approaches:}
These primarily constitute the usage of HE and SMPC. HE aims to protect privacy of aggregated models by encrypting the parameters of the local models. However, this is computationally intensive, often leading to slower processing speeds, which can be a bottleneck in real-time applications \cite{costache2016ring}. Meanwhile, SMPC splits the data into secret shares distributed among participants for shared computation. Despite being less computationally intensive compared to HE, it requires continuous communication, often resulting in high execution times \cite{mugunthan2019smpai,chen2020privacy}. 
\paragraph{Differential Privacy:} FL methods using DP introduce noise to individual models to enhance privacy, making it challenging to deduce the original model or identify a specific data point's membership. However, the noise added is irremovable and often impacts model accuracy \cite{adnan2022federated,bagdasaryan2019differential,hannemann2024differentially}.
\paragraph{Randomized Encoding-based Approaches:}
Contrary to more computationally demanding approaches, these methods present an efficient way of preserving data privacy, such as maintaining relationships between dot products \cite{chen2011geometric,unal2019framework}. Randomized encoding approaches \cite{applebaum2006computationally,chen2007towards,lin2015secure,swaminathan2024pp} use a variety of techniques, including but not limited to geometric perturbations. Geometric perturbations introduce structured noise to the data. This noise can be factored out, thus differing from differential privacy methods in the context of preserving model accuracy.  It should be noted that the security of randomized encoding-based approaches is dependent on the specific problem at hand since they are tailored to meet the unique demands and constraints of a privacy-preserving task.

\subsection{Privacy-preserving Kernel Learning using Randomized Encoding}
The integration of randomized encoding methods into kernel learning presents a promising avenue, particularly in their cost-effectiveness compared to traditional approaches. This subsection delves into two noteworthy contributions in this domain that achieve exact model predictions: ESCAPED \cite{unal2021escaped} and FLAKE \cite{hannemann2023privacy}, each offering distinct methodologies and implications.

ESCAPED makes use of randomized encoding within a multi-party computational framework for kernel computations. Its core innovation lies in enabling secure, collaborative computing without compromising individual data privacy. However, this approach necessitates intricate communication channels among all participating data providers. This requirement becomes particularly cumbersome with the introduction of new data entities, posing significant logistical challenges in scalable deployments. On the other hand, FLAKE works towards optimizing communication overhead in kernel-based learning. It employs a one-shot federated learning architecture, requiring data providers to only communicate once with a central server. However, this methodology encounters its own set of challenges, particularly with computational efficiency and scalability. It especially demonstrates limitations in processing high-dimensional data, a common characteristic in complex datasets such as medical images. 

In this work, we present an alternative to current randomized encoding methods for kernel computation in machine learning. 
We propose a one-shot algorithm, hence bypassing ESCAPED's communication overheads, while scaling better than FLAKE to accommodate higher dimensional datasets.

\section{Methodology} \label{sec:methods}

In this section, we delve into the methodology of our approach, focusing on private kernel computation in a distributed environment. First, we detail the adversarial architecture we operate in. In our architecture, for simplicity, we consider participants Alice, Bob, and Charlie, along with a central server orchestrating the training of a machine learning model, as described in Figure \ref{fig:overview}. However, this can be extended to any number of participants. We operate under the assumption of semi-honest participants and a non-colluding central server. A semi-honest party follows the protocol correctly but may attempt to infer additional information. This is in line with existing state-of-the-art \cite{unal2021escaped,hannemann2023privacy}.

\begin{figure*}[h]
    \centering
    \includegraphics[width=0.9\linewidth, trim={2 2 120 200}, clip]{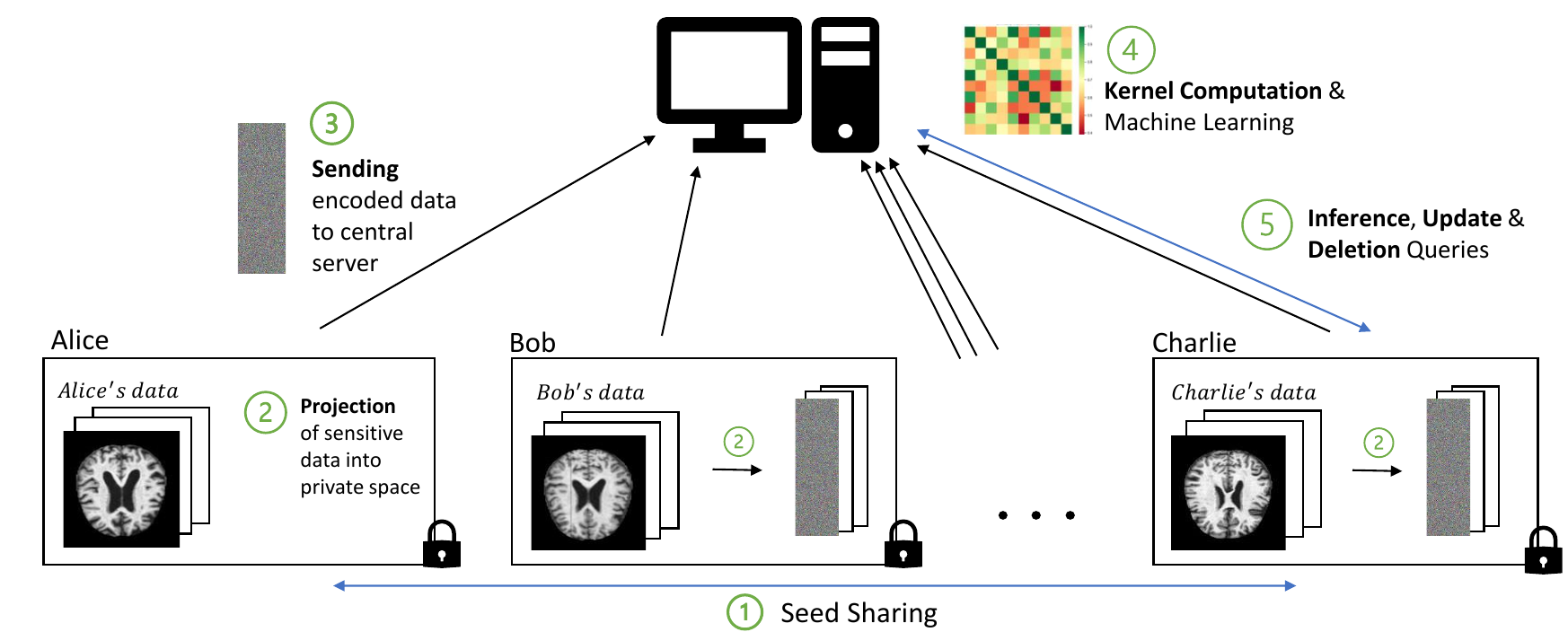}
    \caption{Overview of OKRA.}
    \label{fig:overview}
    \vskip -0.2in
\end{figure*}

Our method must satisfy four key requirements, with \textbf{privacy} being the first: Neither the central server, nor the corrupted participants can learn a non-corrupt participant's data, and the image sizes remain concealed from the server. Second, \textbf{correctness} is crucial, with the model's accuracy needing to align that of a centrally trained model. Third, \textbf{efficiency} is important, as communication costs and execution times should be practical even when input data scales up. Finally, \textbf{data adaptability} is essential, allowing the model to accommodate new data and participants with minimal overhead.






We will focus on four of the most widely used kernel functions: Linear,  Gaussian, Polynomial, and Rational Quadratic kernels. We begin by briefly defining them.

\subsection{Preliminaries}
Kernel functions project the data into a Hilbert space using a feature map, and compute the similarity between the data points in this space. The trick here lies in the ability to compute this similarity without explicitly mapping the data - a consequence of Mercer's theorem, which provides necessary and sufficient conditions for a function from an input space to be a 'kernel', ensuring the mapped data resides in a Hilbert space. Below are four widely used kernels of interest: Linear Kernel ($K_{lin}$) \cite{cortes1995support}, Gaussian Kernel ($K_{RBF}$) \cite{boser1992training}, Polynomial Kernel ($K_{poly}$) \cite{scholkopf2002learning}, and Rational Quadratic Kernel ($K_{RQ}$) \cite{williams1995gaussian} -
\begin{align*}
K_{lin}(x,y) &= xy^T, & K_{RBF}(x,y) &= \gamma^2 \exp\left(-\frac{{\lVert x-y \rVert^2}}{2l^2}\right), \\
K_{poly}(x,y) &= (1 + xy^T)^d, & K_{RQ}(x,y) &= \gamma^2 \left(1 + \frac{{\lVert x-y \rVert^2}}{2\alpha l^2}\right)^{-\alpha}.
\end{align*}

Here, $\gamma$ is the amplitude, $l$ the length-scale, $d$ the degree of the polynomial, and $\alpha$ the scale mixture. We now define orthonormal k-frames, that are central to our study.

\begin{definition} \textit{Orthonormal k-frame ~\cite{christensen2003introduction}}
\\ A \emph{k-frame} is an ordered set of $k$ linearly independent vectors in a vector space of dimension $n$, with $k \leq n$. An ordered set of $k$ linearly independent orthonormal vectors is called an \emph{Orthonormal k-frame}. 
\end{definition}

\subsection{The OKRA Methodology}
\subsubsection{Data Preprocessing}
Prior to delving into our methodology, we now detail the preprocessing steps for medical imaging datasets. Each dataset comprises $n$ medical images, formatted into a four dimensional array of size $n \times h \times w \times c,$ where $h$ is the height, $w$ the width, and $c$ the number of color channels. We flatten each image into a 1D array, so that the dataset has the size $n \times (hwc:=f)$. 
Participants Alice, Bob, and Charlie possess preprocessed data matrices $A$, $B$, and $C$ of size  $n_{\mathcal{A}} \times f$, $n_{\mathcal{B}} \times f$, and $n_{\mathcal{C}} \times f$ respectively. While $n_{\mathcal{A}}$, $n_{\mathcal{B}}$, and $n_{\mathcal{C}}$ may differ due to variable data availability among participants, we require all participants contribute more than one image. The participants share a seed using any robust Public Key Infrastructure (PKI) for secure communication. We assume a trusted third party for fair public key distribution, in line with standard privacy-preserving federated learning practices.

\subsubsection{Seed Generation}
A robust Public Key Infrastructure (PKI) ensures every participant possesses the public signing keys of all others. At the onset of the protocol, a participant is arbitrarily designated as the leader, responsible for the generation of a random seed. A secure seed is shared with participants using public-key encryption and verified with digital signatures. We assume a trusted third party for fair public key distribution, in line with standard privacy-preserving federated learning practices \cite{bonawitz2017practical,zhang2020batchcrypt}.

\subsubsection{Randomized Encoding}
The participants Alice $(\mathcal{A})$, Bob $(\mathcal{B})$ and Charlie $(\mathcal{C})$ with their shared seeds generate their own sets of $j$ orthonormal k-frames $({_{\mathcal{A}}}O_1,\ldots,{_{\mathcal{A}}}O_j)$, $({_{\mathcal{B}}}O_1,\ldots,{_{\mathcal{B}}}O_j)$, and $({_{\mathcal{C}}}O_1,\ldots,{_{\mathcal{C}}}O_j)$ respectively, using the seed such that \[_\mathcal{P} O_i=[{_\mathcal{P} v_i^1}^T \ldots {_\mathcal{P} v_i^{k_i}}^T],\] for $\mathcal{P} \in\{\mathcal{A},\mathcal{B},\mathcal{C}\}$. Further, for any $p \leq j$ and $\mathcal{P},\mathcal{Q} \in \{\mathcal{A},\mathcal{B},\mathcal{C}\}$, it holds that \[\left({_\mathcal{P}}O_p^{\dag}\right) \left({_\mathcal{Q}}O_p\right)=I,\]  where $\dag$ denotes the conjugate transpose. Each $_\mathcal{P} v_i^l$ is an element of a corresponding vector space $V_{i}$ defined over $\mathbb{C}$ such that $k_i \leq dim(V_i)$. Further, $ V_{{1}}\oplus \ldots \oplus V_{{j}} = V, \text{ } dim(V)=f+k,\text{ and } k_1+\ldots +k_j=f$. Then we define $_\mathcal{P} \Gamma$ to be the direct sum 
\[{_\mathcal{P}} \Gamma = {_\mathcal{P}} O_1 \oplus \dots \oplus  {_\mathcal{P}} O_j.\]

To enhance privacy, a random permutation matrix $\sigma \in S_{f}$ is generated using the shared seed by all three participants. Here $S_f$ denotes the symmetric group of degree $f$. They then permute the rows of their matrices ${_{\mathcal{A}}}\Gamma,{_{\mathcal{B}}}\Gamma$, and ${_{\mathcal{C}}}\Gamma$ according to this permutation, and subsequently, encode their data to derive $A'=A({_{\mathcal{A}}}\Gamma \circ \sigma)^{\dag} \in \mathbb{C}^{n_{\mathcal{A}} \times (f+k)}$, $B'=B({_{\mathcal{B}}}\Gamma \circ \sigma)^{\dag} \in \mathbb{C}^{n_{\mathcal{A}} \times (f+k)}$, and $C'=C({_{\mathcal{C}}}\Gamma \circ \sigma)^{\dag} \in \mathbb{C}^{n_{\mathcal{C}} \times (f+k)}$. 

Alice, Bob, and Charlie send $A'$, $B'$, and $C'$ to the central server. Let $A_i$ denote the $i^{th}$ row of matrix $A$ - an image in our setting. We propose the following.

\begin{theorem}\label{correctness}
    Given $A'$, $B'$ and $C'$, the central server can correctly compute the Linear, Gaussian, the Polynomial and the Rational Quadratic Kernels for the distributed input data.
    \end{theorem}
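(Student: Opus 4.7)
The plan is to reduce the theorem to a single algebraic claim: that the central server can compute every inter- and intra-party Gram block, i.e.~the matrix of original-data inner products $A_i B_j^T$ and analogously for every pair of participants (including $\mathcal{P} = \mathcal{Q}$), from the encoded matrices $A', B', C'$. This reduction suffices because each of the four kernels depends only on inner products: $K_{lin}$ and $K_{poly}$ are explicit functions of $xy^T$, while $K_{RBF}$ and $K_{RQ}$ are functions of $\lVert x - y \rVert^2 = xx^T + yy^T - 2xy^T$, which is itself expressible from inner products and their diagonals. Since the hyperparameters $\gamma, l, d, \alpha$ are public, once the Gram blocks are recovered the server performs only entrywise scalar arithmetic to produce the four kernel matrices.

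The technical heart is the calculation
\begin{equation*}
A'(B')^{\dag} \;=\; A \bigl({_{\mathcal{A}}}\Gamma \circ \sigma\bigr)^{\dag} \bigl({_{\mathcal{B}}}\Gamma \circ \sigma\bigr) B^{\dag}.
\end{equation*}
Because the permutation $\sigma$ is shared and applied identically by both parties, it contributes a unitary factor of the form $P_\sigma^{\dag} P_\sigma = I$ that self-cancels, leaving ${_{\mathcal{A}}}\Gamma^{\dag}\, {_{\mathcal{B}}}\Gamma$ in the middle. Using the block-diagonal construction ${_{\mathcal{P}}}\Gamma = {_{\mathcal{P}}}O_1 \oplus \dots \oplus {_{\mathcal{P}}}O_j$ and the fact that conjugate transposition commutes with direct sums,
\begin{equation*}
{_{\mathcal{A}}}\Gamma^{\dag}\, {_{\mathcal{B}}}\Gamma \;=\; \bigoplus_{p=1}^{j} \bigl({_{\mathcal{A}}}O_p\bigr)^{\dag}\bigl({_{\mathcal{B}}}O_p\bigr) \;=\; \bigoplus_{p=1}^{j} I_{k_p} \;=\; I_f,
\end{equation*}
where the middle equality invokes the defining paired-frame identity $({_{\mathcal{P}}}O_p)^{\dag}({_{\mathcal{Q}}}O_p) = I$ and the last uses $k_1 + \dots + k_j = f$. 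Substituting back yields $A'(B')^{\dag} = A B^{\dag}$, and the identical argument with any ordered pair of participants (and with each party against itself) recovers the remaining Gram blocks---in particular the diagonal norms $\lVert A_i \rVert^2, \lVert B_i \rVert^2, \lVert C_i \rVert^2$ required by the distance-based kernels.

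I expect the only nontrivial point in writing this out to be keeping the two layers of shared randomness conceptually separate: the orthonormal k-frames themselves differ across participants (each party samples its own frames, independently from the shared seed), and correctness rests entirely on the constructional guarantee that any cross-product $({_{\mathcal{P}}}O_p)^{\dag}({_{\mathcal{Q}}}O_p)$ equals the identity; separately, the row-permutation $\sigma$ is shared precisely so that its action cancels in the product $\Gamma^{\dag}\Gamma$. I would include a brief remark observing that the construction would fail if participants used independent $\sigma$'s or seeds producing non-paired frames, in order to clarify why the specific generation protocol of the preceding subsections is exactly what is required. Beyond this, the evaluation of the four kernel formulas from the recovered Gram blocks is mechanical.
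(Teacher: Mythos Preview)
Your proposal is correct and follows the paper's own argument: both reduce the claim to the identity $P'(Q')^{\dag}=PQ^{T}$, established from the block-diagonal relation $\bigoplus_{p}({_{\mathcal{P}}}O_p)^{\dag}({_{\mathcal{Q}}}O_p)=I_f$ together with the shared permutation, after which all four kernels are rebuilt from the recovered inner products and squared distances. One minor slip in your write-up: with $\Gamma\circ\sigma=\Gamma P_\sigma$ the permutation factors sit on the \emph{outside}, giving $P_\sigma^{\dag}\bigl({_{\mathcal{A}}}\Gamma^{\dag}{_{\mathcal{B}}}\Gamma\bigr)P_\sigma$, not between the frame factors as you describe---but since the inner block is already $I_f$ this does not affect the conclusion.
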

    \begin{proof}
        Without loss of generality, let us consider the kernels to be computed between Alice and Bob. Let us denote an image from Alice as $a = A_i$, and Bob as $b=B_j$. Then representing the encoded images as $a'=A'_i$ and $b'=B'_j$, we propose that the functions to be computed by the central server are as follows.
\begin{gather*}
    K_{lin}^{\dag}(a',b') = [(A'_i)(B'_j)^{\dag}], \quad 
    K_{RBF}^{\dag}(a',b') = \exp \big( -\gamma d_{ij}(A',B') \big),\\
    K_{poly}^{\dag}(a',b') = \left( 1 + [(A'_i)(B'_j)^{\dag}] \right)^d,  \quad 
    K_{RQ}^{\dag}(a',b')=\gamma^2 \left(1+\frac{d_{ij}(A',B')}{2\alpha l^2}\right)^{-\alpha},
\end{gather*}
where
\begin{equation*}
        d_{ij}(A',B'):=\big( [(A'_i)(A'_i)^{\dag}] + [(B'_j)(B'_j)^{\dag}] - [(A'_i)(B'_j)^{\dag}] - [(B'_j)(A'_i)^{\dag}] \big).
    \end{equation*}   
To show that these are well defined kernels, we prove the correctness of the computed functions. Without loss of generality for $P,Q \in \{A,B\}$ and any $p,q$ -
\begin{equation*}
    [(P'_p)(Q'_q)^{\dag}] = \left[(P')\left( {_{\mathcal{Q}}} \Gamma \sigma (Q_q)^{\dag}\right)\right]_p\\
    = \left[P \sigma^{\dag}\left(\bigoplus_{k=1}^q {_\mathcal{P}}O_k^{\dag} {_Q}O_k\right)\sigma Q_q\right]_p
    =\left[P_pQ_q\right].
\end{equation*}
Hence, it follows that 
\begin{gather*}
    [(A'_i)(B'_j)^{\dag}]=ab^T,\quad
    [(B'_j)(A'_i)^{\dag}]=ba^T,\\
    [(A'_i)(A'_i)^{\dag}]=aa^T,\quad
    [(B'_j)(B'_j)^{\dag}]=bb^T,\\
    K^{\dag}_{lin}(a',b')=K_{lin}(a,b),\quad
    K^{\dag}_{RBF}(a',b')=K_{RBF}(a,b),\\
    K^{\dag}_{poly}(a',b')=K_{poly}(a,b),\quad
    K^{\dag}_{RQ}(a',b')=K_{RQ}(a,b).
\end{gather*}

This shows the correctness of the computed functions, hence theoretically satisfying our \textbf{\emph{Correctness}} requirement. Further, to show that these functions are well defined kernels, we note that the map $\Gamma:\mathbb{R}^f\to \mathbb{C}^{f + k}$ defined by $_{\mathcal{P}}\Gamma(v)={v}'$ for $v\in\mathbb{R}^f$ and $P \in \{\mathcal{A},\mathcal{B},\mathcal{C}\}$ is a bijection, and since $K_{lin}$, $K_{RBF},$ $K_{poly}$ and $K_{RQ}$ are well defined kernels, so are $K_{lin}^{\dag}=K_{lin}\circ(\Gamma^{-1}\times\Gamma^{-1}),$ $K_{RBF}^{\dag}=K_{RBF}\circ(\Gamma^{-1}\times\Gamma^{-1}),$ $K_{poly}^{\dag}=K_{poly}\circ(\Gamma^{-1}\times\Gamma^{-1}),$ and $K_{RQ}^{\dag}=K_{RQ}\circ(\Gamma^{-1}\times\Gamma^{-1})$ on the range of $\Gamma\times\Gamma$.
\[\begin{tikzcd}
	{\mathbb{C}^{f+k}\times \mathbb{C}^{f+k}} \\
	{\mathbb{R}^f\times\mathbb{R}^f} & {\mathbb{R}} \\
	{\mathcal{H}\times\mathcal{H}}
	\arrow["\Gamma\times\Gamma", from=2-1, to=1-1]
	\arrow["{\phi\times \phi}"', from=2-1, to=3-1]
	\arrow["{(\Gamma^{-1}\circ\phi)\times(\Gamma^{-1}\circ\phi)}"', shift right=5, curve={height=24pt}, shorten >=4pt, from=1-1, to=3-1]
	\arrow["{\langle\cdot,\cdot\rangle}"', from=3-1, to=2-2]
	\arrow["K"{description}, from=2-1, to=2-2]
	\arrow["{K^{\dag}}"{description}, from=1-1, to=2-2]
\end{tikzcd}\]
 Hence we have the above commutative diagram. 
\end{proof}

\subsection{Data Adaptability}
In our methodology, Alice, Bob, and Charlie contribute datasets $ A $, $ B $, and $ C $. Introducing additional data $ X $ from Charlie or another participant is seamless. The shared seed is used to generate $ X' $ for further kernel computations with the existing data. Only specific kernel portions are updated, saving computational resources. When data changes dynamically, OKRA creates new masks, denoted by $ \Gamma_t $ for training iteration $ t $, ensuring privacy for each party's data alterations. If a participant withdraws data, the corresponding contribution is removed, complying with data protection standards such as the GDPR.

\section{Privacy Analysis}
\label{sec:privacy}
As stated earlier, OKRA operates under the environment consisting of a proper subset of semi-honest participants, and/or a non-colluding semi-honest central server. For this purpose, we will show that it guarantees security in both settings. 

It is essential to highlight that our privacy model excludes certain extreme adversarial conditions. We exclude data distributions where the number of features or the training data of one or more input parties can be guessed, and protocols with only one input party. 

\begin{theorem} \label{theorem3}
    The proposed methodology is secure against a semi-honest adversary who corrupts the central server.
    \end{theorem}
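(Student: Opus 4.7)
The plan is to establish Theorem~\ref{theorem3} via the standard simulation paradigm for semi-honest security: I would construct a probabilistic polynomial-time simulator $\mathrm{Sim}$ that, taking as input only what the server is entitled to learn (namely the kernel values computed via the formulas of Theorem~\ref{correctness}), produces a transcript statistically indistinguishable from the server's real view $(A', B', C')$. Since the shared seed is distributed by the PKI and never transmitted to the server, the matrices $({_\mathcal{A}}\Gamma \circ \sigma)$, $({_\mathcal{B}}\Gamma \circ \sigma)$, $({_\mathcal{C}}\Gamma \circ \sigma)$ are information-theoretically hidden from the server's perspective, and its view is entirely determined by $A', B', C'$.

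For the simulator, $\mathrm{Sim}$ first samples dummy datasets $\tilde A, \tilde B, \tilde C$ whose pairwise Gram matrices coincide with the inner-product structure implicit in the kernel output; such datasets exist since the kernel matrices are positive semidefinite and realisable from real data. Then $\mathrm{Sim}$ generates a fresh random seed, applies the honest derivation to obtain synthetic orthonormal k-frames ${_\mathcal{A}}\tilde\Gamma, {_\mathcal{B}}\tilde\Gamma, {_\mathcal{C}}\tilde\Gamma$ and permutation $\tilde\sigma$, and outputs $\tilde A' := \tilde A({_\mathcal{A}}\tilde\Gamma \circ \tilde\sigma)^{\dagger}$, together with the analogous $\tilde B'$ and $\tilde C'$. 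By the correctness proof, the server computing on $(\tilde A', \tilde B', \tilde C')$ recovers exactly the same kernel values it obtains in the real execution.

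To argue indistinguishability I would show that, conditioned on the Gram structure of the underlying data, the joint distribution of $(A', B', C')$ in the real protocol coincides with that of $(\tilde A', \tilde B', \tilde C')$ produced by $\mathrm{Sim}$. The crux is that the seed-to-frame map induces a uniform distribution over biorthogonal collections of k-frames, which act as random isometric embeddings $\mathbb{R}^f \hookrightarrow \mathbb{C}^{f+k}$; this randomness is precisely what hides all geometric features of each row beyond its pairwise inner products with other rows. The shared permutation $\sigma$ additionally destroys any positional structure of the flattened image, so the server learns nothing about the height, width, or channel count, which secures the image-size privacy requirement. Finally, one must rule out auxiliary leakage: since the excluded adversarial conditions forbid guessing $f$ or the training data, a counting argument shows that for each observed transcript the number of dataset-encoding pairs consistent with it is large and uniformly spread over datasets with the correct Gram structure.

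The main obstacle, I expect, is the correlated randomness across participants. All three encodings are derived from the same seed and are tied together by the biorthogonality constraint $({_\mathcal{P}}O_p^{\dagger})({_\mathcal{Q}}O_p) = I$, so one cannot treat $(A', B', C')$ as independent encodings. The server, by forming cross-products $A'B'^{\dagger} = AB^T$, directly exploits this correlation, and the argument must show that these cross-products are \emph{all} that the correlation reveals. Formalising this requires tracking the joint distribution of the three encoding operators under the Haar-like measure on orthonormal k-frames, verifying that conditioning on the realised cross-Gram matrices leaves the individual rows of $A$, $B$, $C$ with maximally entropic posteriors subject to those constraints, and concluding that $\mathrm{Sim}$'s output reproduces this conditional distribution. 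Carrying out this bookkeeping cleanly, without implicitly assuming independence that the construction does not provide, is the technically delicate part of the proof.
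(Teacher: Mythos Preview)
Your proposal takes a genuinely different route from the paper. The paper does \emph{not} give a simulation-based proof. Instead it argues informally that the server's view is underdetermined: for the encoded matrices it exhibits, for any unitary $U\in\mathbb{C}^{f\times f}$, an alternative pair $(\tilde A,\tilde\Gamma)=(AU,\Gamma\circ\sigma\circ U)$ producing the same $A'$, and then appeals to the unsolved unbalanced orthogonal Procrustes problem to rule out reconstruction from auxiliary data; for the Gram matrices $AA^T$, $BB^T$, $AB^T$ it borrows the argument from ESCAPED that a further unitary freedom $A'\mapsto A'U$ leaves all of them invariant, so the server learns only singular values and left singular vectors. There is no simulator, no distributional claim about the frames, and no indistinguishability statement in the cryptographic sense.

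Your approach is more ambitious and closer to a standard semi-honest definition, but it leans on an assumption the paper never makes: that the seed-to-frame map yields a Haar-like uniform distribution over biorthogonal k-frame collections. Without that, the step ``conditioned on the Gram structure, the joint law of $(A',B',C')$ is independent of which realisation $(A,B,C)$ was used'' does not follow; the paper's encoding is deterministic given the seed, and nothing in the construction guarantees the required rotational invariance of the induced measure. You are right that the cross-party biorthogonality constraint is the delicate point, and the paper indeed sidesteps it by treating (a) and (b) separately rather than analysing the joint view. So: your plan would yield a stronger statement if the uniformity premise can be justified, but as written it assumes more than the paper provides, whereas the paper's own argument is weaker (non-uniqueness plus a hardness heuristic) but self-contained.
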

\vspace{-0.2cm}
\begin{proof}
A semi-honest central server is only the receiver of the encoded data from the input parties, and follows the protocol as intended. Without loss of generality, let there be two input parties Alice and Bob with input data $A \in \mathbb{R}^{n_{\mathcal{A}} \times f}$ and $B \in \mathbb{R}^{n_{\mathcal{B}} \times f}$, respectively, where $n_\mathcal{P}$ is the number of samples with the corresponding party $\mathcal{P}\in\{\mathcal{A},\mathcal{B}\}$ and $f$ is the number of features in each image. The semi-honest function party receives the projected data from Alice and Bob. These consist of $A'=A\gamma^\dag \in \mathbb{C}^{n_{\mathcal{A}} \times (f+k)}$ and $B'=B\gamma^\dag \in \mathbb{C}^{n_{\mathcal{B}} \times (f+k)}$ where $k > 0$. Then, it computes the kernel functions mentioned above in Theorem \ref{correctness}. The data to which the function party has access then includes 
\begin{compactenum}
    \item[$(a)$] $A'$ and analogously, $B'$,
    \item[$(b)$] $AB^T=(BA^T)^T$, $AA^T$ and analogously $BB^T$.
\end{compactenum}
Regarding $(a)$, it is evident that $A'$ does not reveal the number of features of $A$. We now show that $A'$ is not produced by a unique pair $A$ and $_{\mathcal{A}}\Gamma\circ \sigma$.

Given a unitary matrix $U \in \mathbb{C}^{f\times f}$ with $f>1$, for $\tilde{A}=AU$ and $_{\mathcal{A}} \tilde{\Gamma}=\Gamma \circ \sigma \circ U$, we have $A'=A(_{\mathcal{A}}\Gamma\circ\sigma)^\dag = \tilde{A}_{\mathcal{A}}\tilde{\Gamma}^\dag$. In the context of reconstruction attacks using analogous data, consider a sample dataset $\hat{A}$ that is drawn from a distribution similar to $A$. In the context of reconstructing $A$, the goal is to find a transformation matrix $R$ such that $A'R=\hat{A}$ approximates $A$. However, since $\hat{A}$ and $A$ do not exist within the same feature space, this leads to loss of information, rendering the approximation ineffective. Further, since $\Gamma$ consists of orthogonal vectors, the inability to approximate $A$ can also be attributed to the unbalanced orthogonal procrustes problem which is unsolved \cite{gower2004procrustes}.

Regarding $(b)$, we adopt the proof from \cite{unal2021escaped}. The matrices that produce these symmetric matrices are not unique, since for any unitary matrix $U \in \mathbb{C}^{(f+k)\times (f+k)}$ where $f>1$, labeling $\tilde{A}=A'U$ and $\tilde{B}=B'U$, we have
\begin{gather*}
    \tilde{A}\tilde{A}^T=AA^T,\quad
        \tilde{B}\tilde{B}^T=BB^T,\quad
                \tilde{A}\tilde{B}^T=AB^T.
\end{gather*}
Similar to the prior argument, these results can also be derived from data in a different feature space. As a result, the function party is limited to learning only the singular values and vectors of the matrices. This means it can derive $U \text{ and } S$
from the singular value decomposition $ A = USV^T $ by eigen-decomposing $ AA^T $. Yet, this information is inadequate to deduce $A$ since $V$ is unknown. Thus, the function party does not learn $\text{(i) the input data or (ii) the number of} $ $\text{the features.} $
\end{proof} 
\begin{theorem}
    The proposed methodology is secure against a proper subset of semi-honest participants. 
\end{theorem}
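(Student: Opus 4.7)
The plan is to reduce this theorem to essentially the observation that the encoded data is sent only to the central server, combined with a simulation-style argument for the shared randomness and the released output. I proceed as follows.

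First, I would fix notation. Without loss of generality let Alice be honest and let the corrupted set be the remaining participants, whose joint view I denote $\mathsf{View}$. Since the central server is non-colluding, $\mathsf{View}$ consists of exactly three ingredients: the corrupted parties' own input matrices (e.g., $B$ and $C$), the values derived from the shared seed (namely the frames ${}_{\mathcal{A}}\Gamma,{}_{\mathcal{B}}\Gamma,{}_{\mathcal{C}}\Gamma$ and the permutation $\sigma$), and the kernel output forwarded back from the server after training. I would then point out that the protocol as described in Section~\ref{sec:methods} prescribes Alice to transmit $A'=A({}_{\mathcal{A}}\Gamma\circ\sigma)^{\dag}$ exclusively to the central server, so the encoded matrix $A'$ is \emph{not} contained in $\mathsf{View}$.

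Second, I would argue that the shared randomness reveals nothing about $A$. The seed is sampled before Alice's data is processed and the mapping from seed to $({}_{\mathcal{A}}\Gamma,\sigma)$ is fixed in advance, so ${}_{\mathcal{A}}\Gamma$ and $\sigma$ are independent of $A$. Although the corrupted parties can indeed reconstruct ${}_{\mathcal{A}}\Gamma$ from the seed, this matrix on its own is simply a publicly derivable random orthonormal structure; absent the ciphertext $A'$ it yields no algebraic equation involving $A$. In particular, the inversion attack available to the corrupted server in Theorem~\ref{theorem3} (which starts from $A'$ and tries to recover $A$) has no analogue here because its starting point is missing.

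Third, I would handle the released output via simulation. By Theorem~\ref{correctness}, the kernel entries the corrupted parties eventually receive are exactly the kernel values $K(a_i,b_j)$ that an ideal trusted third party would have released given only the combined dataset. Hence a standard simulator can reproduce $\mathsf{View}$ perfectly: it samples a fresh seed, derives $({}_{\mathcal{A}}\Gamma,{}_{\mathcal{B}}\Gamma,{}_{\mathcal{C}}\Gamma,\sigma)$ from it exactly as the protocol does, uses the corrupted parties' own inputs verbatim, and substitutes the ideal kernel values for the server's reply. The real and simulated views are identically distributed, which is precisely the semi-honest security guarantee.

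The step I expect to be the main obstacle is making the third point fully rigorous in the case where additional participants or additional data are added via the adaptability mechanism, because the mask $\Gamma_t$ changes across iterations and one must verify that the sequence of outputs across iterations still admits a simulator. I would address this by noting that each $\Gamma_t$ is generated from the same shared-seed mechanism and that the kernel outputs at each iteration remain functions of the underlying data only, so the simulator argument extends iteration by iteration without modification.
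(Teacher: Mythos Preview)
Your first point already contains the paper's entire argument. The paper's proof is a single sentence: because the protocol is one-shot with \emph{no communication back from the central server}, the data flow is strictly uni-directional (participants $\to$ server), so corrupt participants never receive anything that depends on an honest participant's input. That is exactly your observation that $A'$ is not contained in $\mathsf{View}$.

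Your second point, that the shared seed and hence ${}_{\mathcal{A}}\Gamma,\sigma$ are independent of $A$, is correct and is arguably a detail the paper's one-liner glosses over; it does no harm to include it. Your third point, however, rests on a misreading of the protocol: in the methodology of Section~\ref{sec:methods} the central server keeps the kernel matrix and performs the training itself; it does not forward kernel values, a trained model, or any other output back to the participants. Consequently there is no ``kernel output forwarded back from the server'' in $\mathsf{View}$, and the simulation argument you build around it is unnecessary. The obstacle you anticipate in the data-adaptability setting likewise evaporates for the same reason: each iteration $t$ is still purely uni-directional, so there is nothing new for a simulator to reproduce beyond the seed-derived $\Gamma_t$, which is again data-independent.
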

\begin{proof}
    Since the proposed methodology follows one-shot federated learning with no communication back from the central server, the uni-directional flow of data prohibits the corrupt participants from learning anything about the data of the non-corrupt participants. 
\end{proof}

Therefore, we've shown that the data of non-corrupt participants can not be learned by corrupt semi-honest participants, and that the central server doesn't learn the image sizes, satisfying our \textbf{\emph{Privacy}} requirement.

\section{Experiments} \label{sec:experiments}

Experiments were conducted on an AMD 7713 at 2.0GHz with one CPU core and 16 GB of RAM, simulating hospitals collaboratively performing kernel-learning with limited resources. Communication settings included a bandwidth of $1.25$MBps, an average latency of $0.1s$, and a packet loss of $2\%$. 
Unless stated otherwise, OKRA was tested with three participants. Each participant operated as an independent process using TCP connections. Data encoding and transmission to the central server were done via multiple threads. The server performed kernel functions on aggregated data for machine learning model training.
Experiments were deemed successful if OKRA's accuracy matched non-private classification methods and surpassed the run-times of state-of-the-art randomized encoding algorithms, ESCAPED ~\cite{unal2021escaped} and FLAKE ~\cite{hannemann2023privacy}.

\subsection{Datasets}

To demonstrate OKRA's applicability, we experimented with two clinical image datasets with different label types and distributions. The first dataset consists of preprocessed MRI images capturing various stages of Alzheimer's disease \cite{lamontagne2019oasis}. The second dataset features augmented images of white blood cells, representing four distinct cell types \cite{BCCD_dataset}. 
For run-time analysis, we used synthetic datasets, essential for benchmarking without real-world data irregularities. The generated data is balanced, with multi-class clusters drawn from a multivariate normal distribution (mean $0$, variance $1$ for each attribute). We generated $400$ samples per participant with $1000$ features.

\subsection{Correctness of the Model}

Before beginning the run-time experiments, we aimed to validate OKRA's correctness on kernel learning methods. We conducted classification using SVM and dimensionality reduction with Kernel-PCA. Each experiment was run 10 times, testing various combinations of datasets, methods, and kernels (Gaussian, Polynomial, and Linear). Seeds were used for deterministic behavior and reproducible results.

For the classification task, OKRA-SVM was tested against a Naive SVM. Both implementations shared identical hyperparameters $C \in \{2^{-4}, ..., 2^{10}\}$ (misclassification penalty) and $p \in \{1, ..., 5\}$ (degree), optimized using Grid Search. Both were trained with a 5-fold cross-validation. For this implementation, we used Sequential Minimal Optimization (libsvm) provided by scikit-learn \cite{zeng2008fast}. We applied Macro Averaging when the datasets had an unbalanced distribution of classes. Correspondingly, we have evaluated the balanced datasets with Micro Averaging. Table~\ref{table:scores_svm} shows that OKRA-SVM and the naive classifier produce the same F1 and ROC AUC scores.

\begin{wraptable}{r}{0.4\textwidth} 
\centering
\vskip -0.4in
\caption{Performance metrics for OKRA-PCA}
\vskip 0.1in
\label{table:scores_pca}
\footnotesize 
\begin{tabular}{>{\centering\arraybackslash}m{1.8cm}@{\hskip 3pt} >{\centering\arraybackslash}m{1.6cm}@{\hskip 3pt} c@{\hskip 5pt} c}
\toprule
\multirow{2}{*}{Dataset} & \multirow{2}{*}{Kernel} & \multicolumn{2}{c}{OKRA} \\
\cmidrule(lr){3-4}
& & MSE & $R^2$ \\
\midrule
Alzheimer's Disease & Polynomial & 0.0 & 1.0 \\
\addlinespace[0.5ex]
Blood Cell & Gaussian & 0.0 & 1.0 \\
\bottomrule
\end{tabular}

\end{wraptable}

For the dimensionality reduction, we compared a naive Kernel-PCA and OKRA-PCA. A kernel is computed, followed by eigenvalue decomposition to sort eigenvalues and eigenvectors by magnitude. The top $n$ eigenvectors are then chosen to transform the data into the new principal component space. To assess utility, the principal components of both Kernel-PCA and OKRA-PCA, utilizing the same kernel, were compared by computing the Mean Squared Error (MSE) and R-squared (R2). 

\begin{table*}[ht!]
\centering
\small 
\resizebox{\textwidth}{!}{%
\begin{tabular}{>{\centering\arraybackslash}m{3.5cm}@{\hskip 5pt} >{\centering\arraybackslash}m{2cm}@{\hskip 3pt} >{\centering\arraybackslash}m{2cm}@{\hskip 5pt} c@{\hskip 5pt} c@{\hskip 5pt} c@{\hskip 5pt} c}
\toprule
\multirow{2}{*}{\begin{tabular}[c]{@{}c@{}}Dataset\\ (Samples $\times$ Features)\end{tabular}} & \multirow{2}{*}{Label Type} & \multirow{2}{*}{Kernel} & \multicolumn{2}{c}{Naive} & \multicolumn{2}{c}{OKRA} \\
\cmidrule(lr){4-5} \cmidrule(lr){6-7}
& & & F1 & ROC AUC & F1 & ROC AUC \\
\midrule
\multirow{2}{*}{\begin{tabular}[c]{@{}c@{}}Alzheimer's Disease \\ \small{$(6400 \times 256 \cdot 256)$}\end{tabular}} & Binary & Gaussian  & $0.91\pm 0.03$ & $0.97\pm 0.01$ & $0.91\pm 0.03$ & $0.97\pm 0.01$ \\
& Multi-class & Gaussian & $0.97\pm 0.04$ & $1.00\pm 0.00$ & $0.97\pm 0.04$ & $1.00\pm 0.00$ \\
\addlinespace[0.5ex]
Blood Cell \\ \small{$(12500 \times 100 \cdot 100 \cdot 3)$} & Multi-class & Gaussian & $0.88\pm 0.01$ & $0.94\pm 0.01$ & $0.88\pm 0.01$ & $0.94\pm 0.01$ \\
\addlinespace[0.5ex]
Synthetic \\ \small{$(200-51200 \times 12800)$} & Multi-class & Linear & $0.89\pm 0.03$ & $0.95\pm 0.02$ & $0.89\pm 0.03$ & $0.95\pm 0.02$ \\
\bottomrule
\end{tabular}%
}
\vskip 0.1in
\label{table:scores_svm}
\vskip -0.2in
\end{table*}

This provides a quantitative comparison of the two PCA methods in terms of their ability to capture the data's variance and structure.
Table~\ref{table:scores_pca} indicates that both OKRA-PCA and naive PCA are highly consistent and produce identical results, explaining the same patterns in the data. 
As expected, the encoding has no influence on the results.  Hence we experimentally satisfy the \textbf{\emph{Correctness}} requirement.

\subsection{Performance Analysis}
To ensure OKRA's computational overhead is manageable, especially with high-dimensional datasets like images, we compared its run-time with FLAKE and ESCAPED. We focused on two run-time types: \emph{Encoding time}, the overhead of encoding data using OKRA for an individual participant, and \emph{Total time}, which includes communication from when the server initiates connections to the completion of the kernel function computation. Our analysis examined the impact of varying the number of participants and image sizes.
i
\paragraph{Scaling up the Number of Participants}
Evaluating our method's adaptability across multiple participants is crucial. Figure \ref{scaling_parties} shows the cumulative time required for data encoding, transmission, and linear kernel function computation. Each participant encodes a dataset of 400 samples with 1000 features. OKRA performs exceptionally well: with 10 participants, the execution takes only 16 seconds on average.

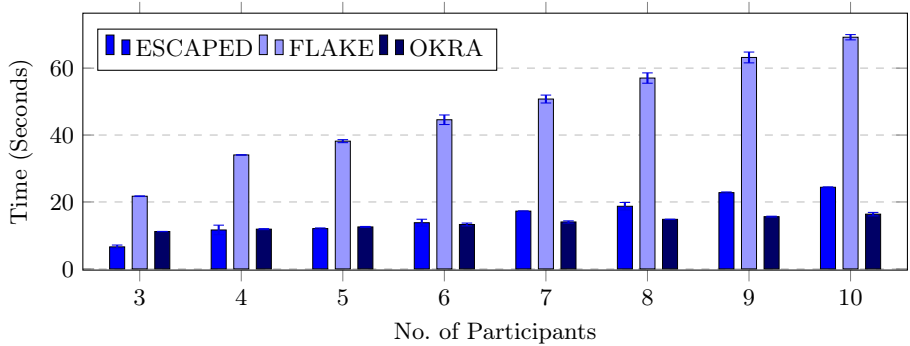
\begin{figure*}[ht!] 
    \centering
        \vskip 0.2in
    \begin{tikzpicture}
        \begin{axis}[
            width=\linewidth,
            height=5cm,
            ybar, 
            enlarge x limits=0.08,
            legend style={at={(0.26,0.94)},
            anchor=north,legend columns=-1},
            ylabel={Time (Seconds)},
            xlabel={No. of Participants},
            symbolic x coords={3, 4, 5, 6, 7, 8, 9, 10},
            xtick=data,
            bar width=0.2cm,
            x=1.35cm,
            error bars/.cd,
            y dir=normal,
            ymajorgrids=true,
            grid style=dashed
            ]
            \addplot+[
                    bar shift=-0.3cm,
                    draw = black,
	                line width = .1mm,
	                fill = pgfplotsmidblue,
                    error bars/.cd,
                    y dir=both,
                    y explicit,
                    error bar style={blue!90!black, thick}] 
            coordinates {
                (3, 6.577577) +- (0.357606535,0.546422615)
                (4, 11.6310785) +- (3.48850061,1.435517054)
                (5, 12.039910957951155) +- (0.06184832825144743, 0.1998005651048747)
                (6, 13.8027093410491) +- ( 0.4266516952, 1.033912897)
                (7, 17.26282247) +- (0.032124782, 0.059115648)
                (8, 18.71406213) +- (0.359431466, 1.136866808)
                (9, 22.78567847) +- (0.125785325, 0.20749712)
                (10, 24.39194409) +- (0.176846808, 0.113780975)
            };
            \addlegendentry{ESCAPED}

            \addplot+[
                    draw = black,
	                line width = .1mm,
	                fill = pgfplotslightblue,
                    error bars/.cd,
                     y dir=both,
                     y explicit,
                     error bar style={blue!90!black, thick}] 
            coordinates {
                (3, 21.71630211) +- ( 0.151531245, 0.087017926)
                (4, 34.05785904) +- (0.503899775, 0.023131999)
                (5, 38.17623015) +- (0.145516358, 0.449171255)
                (6, 44.57180891) +- ( 0.62123542242, 1.4271603522)
                (7, 50.76079378) +- ( 0.37415312542, 1.1823627942)
                (8, 57.03156147) +- ( 0.4115885552, 1.5449171252)
                (9, 63.18762646) +- ( 0.9887654214, 1.61455163542)
                (10, 69.2336031) +- ( 0.8659455562, 0.78845698542)
            };
            \addlegendentry{FLAKE}            

            \addplot+[
                    bar shift=0.3cm,
                    draw = black,
	                line width = .1mm,
	                fill = pgfplotsdarkblue,
                    error bars/.cd,
                     y dir=both,
                     y explicit,
                     error bar style={blue!90!black, thick}]  
            coordinates {
                (3, 11.16575024) +- (0.05004086263427729,0.031320669577025746)
                (4, 11.8755928) +- (1.1015799747988,0.14512748778664902)
                (5, 12.54128449) +- (0.06426182037262977,0.1375182604226578)
                (6, 13.27032421) +- ( 0.42, 0.42)
                (7, 14.02492967) +- ( 0.125225462, 0.325648)
                (8, 14.80615208) +- ( 0.4478523666, 0.036589632)
                (9, 15.6067704) +- ( 0.025687474, 0.135695566)
                (10, 16.36286068) +- ( 0.316175318, 0.513215685)
            };
            \addlegendentry{OKRA}
        \end{axis}
    \end{tikzpicture}
    \vskip -0.1in
    \caption{Runtime comparison with increasing participants.}
    \label{scaling_parties}
    \vskip -0.1in
\end{figure*}

\paragraph{Scaling up the Image Size}

Medical image data, essential for accurate diagnoses and treatment, is typically captured at high resolutions. However, for preprocessing, these images are often scaled down for memory and computational efficiency, with rescaled sizes ranging from $32 \times 32$ up to $256 \times 256$. To assess whether OKRA can encode preprocessed medical images without burdening the data holder, we experimented with increasing image sizes for one participant with 400 images (refer to Fig. \ref{fig:scaling_images}). The results demonstrate that OKRA is applicable to medical imaging, whereas state-of-the-art methods are not. We, therefore, meet both requirements \textbf{\emph{Data Adaptability}} and \textbf{\emph{Efficiency}}.



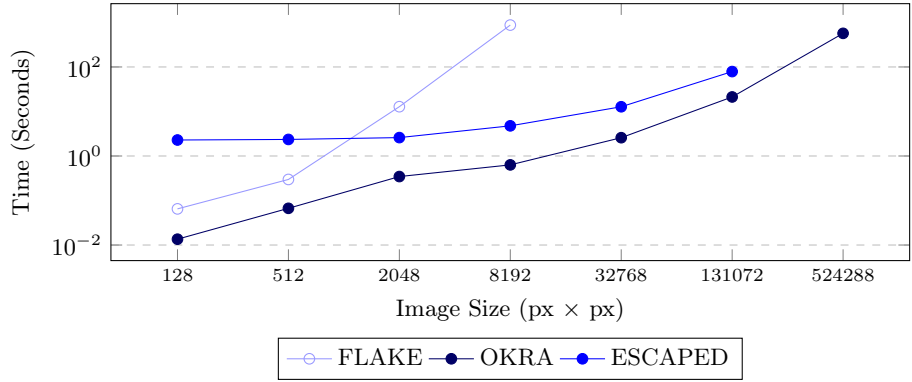
\begin{figure}[h!]
    \centering
    \begin{tikzpicture}
        \begin{axis}[
            width=\linewidth,
            height=5cm,
            xmode=log,
            ymode=log,
            xlabel={Image Size (px $\times$ px)},
            ylabel={Time (Seconds)},
            xtick={128, 512, 2048, 8192, 32768, 131072, 524288},
            xticklabels={\scriptsize 128, \scriptsize 512, \scriptsize 2048, \scriptsize 8192, \scriptsize 32768, \scriptsize 131072, \scriptsize 524288}, 
            legend style={at={(0.5,-0.3)}, anchor=north,legend columns=-1},
            ymajorgrids=true,
            grid style=dashed,
            ]
            
            \addplot[mark=o,pgfplotslightblue] coordinates {
                (128, 0.06489300727844238)
                (512, 0.2971668243408203)
                (2048, 12.81406044960022)
                (8192, 872.997763633728)
            };
            
            \addlegendentry{FLAKE}
            
            \addplot[mark=*,pgfplotsdarkblue] coordinates {
                (128, 0.013526201248168945)
                (512, 0.06677818298339844)
                (2048, 0.34534215927124023)
                (8192, 0.6316003799438477)
                (32768, 2.5774924755096436)
                (131072, 21.19995355606079)
                (524288, 569.0589525699615)
            };
            \addlegendentry{OKRA}

            \addplot[mark=*,pgfplotsmidblue] coordinates {
                (128, 2.2869815826416016)
                (512, 2.35477614402771)
                (2048, 2.590272903442383)
                (8192, 4.758647680282593)
                (32768, 12.765627384185791)
                (131072, 78.79857230186462)
            };
            \addlegendentry{ESCAPED}
        \end{axis}
    \end{tikzpicture}
    \caption{Encoding times against image sizes.}
    \label{fig:scaling_images}
    \vskip -0.2in
\end{figure}

\subsection{Discussion}
Randomized encoding methods, used for privacy-preserving kernel learning, offer the advantage of computational efficiency while maintaining model accuracy. In the field of randomized encoding methods, our proposed solution, OKRA, demonstrates significant advantages. It consistently outperforms both FLAKE and ESCAPED without compromising utility. Notably, OKRA minimizes the need for extended communication rounds and supports data adaptability, issues primarily encountered with ESCAPED. Additionally, OKRA achieves more efficient data encoding than FLAKE and can handle larger image sizes. With these strengths, OKRA emerges as a robust and efficient option within the domain of randomized encoding techniques for image data.



In subsequent work, we aim to explore the challenges associated with the colluding central server scenario. Further, we intend to address malicious participant behavior, where encoded data is altered to cause inaccuracies in the server's results. Additionally, we plan to explore the applicability of OKRA to other kernel methods, given its promising potential. 
\section{Conclusion} \label{sec:conclusion}

Privacy preservation for the analysis of distributed medical images remains a critical challenge. While randomized encoding offers a promising avenue, existing methodologies often grapple with the trade-off between accuracy and efficiency. Addressing this gap with regards to kernel functions applied to high-dimensional data, we introduced OKRA, a novel algorithm operating within a one-shot federated learning paradigm. OKRA not only privately computes exact kernel functions but also ensures that kernel-based machine learning algorithms are trained efficiently. Empirical evaluations further validate the superiority of OKRA, demonstrating its potential to achieve accuracy levels comparable to centralized machine learning models while minimizing computational burdens. As the landscape of machine learning continues to evolve, OKRA's robustness and efficiency position it as a pivotal tool in privacy-preserving endeavors.

\bigskip \noindent {\small{\textbf{Acknowledgments.} This study is supported by the German Federal Ministry of Research and Education (BMBF), under project number 01ZZ2010.}}

%
%

\bibliographystyle{splncs04}
\bibliography{9-bibliography}

\end{document}